\newcommand\mymatrix[1]{\bm{\mathrm{#1}}}
\newlength\imagewidth
\definecolor{dgreen}{rgb}{0,.6,0}
\journalname{Nonlinear Dynamics}
\newcommand\Xkey{\textit{Xkey}}
\newcommand\CKS{\textit{CKS}}
\newcommand\key{\textit{key}}
\newcommand\HD{\textup{HD}}
\newcommand\VD{\textup{VD}}
\begin{document}

\title{Cryptanalyzing image encryption scheme using chaotic logistic map}

\author{Chengqing Li \and Tao Xie \and Qi Liu \and Ge Cheng}

\authorrunning{C. Li et al.}

\institute{
Chengqing Li, Tao Xie, Qi Liu\at
              MOE Key Laboratory of Intelligent Computing and Information Processing,
              College of Information Engineering,
              Xiangtan University, Xiangtan 411105, Hunan, China \\
              Tel.: +86-731-52639779\\
              Fax: +86-731-58292217\\
              \email{Chengqingg@gmail.com}\\
Ge Cheng \at
School of Mathematics and Computational Science,
Xiangtan University, Xiangtan 411105, Hunan, China
}

\date{Received: Jun 7 2014}

\maketitle

\begin{abstract}
Chaotic behavior arising from very simple non-linear dynamical equation of logistic map makes
it was used often in designing chaotic image encryption schemes. However, some properties of chaotic maps
can also facilitate cryptanalysis especially when they are implemented in digital domain. Utilizing stable distribution of the chaotic states
generated by iterating the logistic map, this paper present a typical example to show insecurity
of an image encryption scheme using chaotic logistic map. This work will push encryption and chaos
be combined in a more effective way.

\keywords{cryptanalysis \and chosen-plaintext attack \and image encryption \and logistic map}

\end{abstract}

\section{Introduction}

With the rapid development of information transmission technology and popularization of
multimedia capture device, multimedia data are transmitted over all kinds of wired/wireless networks more
and more frequently. Consequently, the security of multimedia data
becomes a serious concern of many people. However, the traditional
text encryption schemes cannot protect multimedia data efficiently, mainly due to the
big differences between textual and multimedia data, such as strong redundancy between
neighboring elements of uncompressed multimedia data, bulky size of multimedia data, and some special
requirements of the whole multimedia processing system. This challenge stirs
the design of special multimedia encryption schemes to become a hot
research topic in multimedia processing area in the past
two decades.

Because there are some subtle similarities between chaos and
cryptography, chaos were used to design potential secure and efficient schemes in all kinds of applications
in cryptography. Due to easy presentation of image datum, most cryptographic schemes consider image data as
operation object \cite{YaobinMao:CSF2004,Pareek:ImageEncrypt:IVC2006,Tong:ImageCipher:IVC07}. According to the record of \textit{Web of Science},
about two thousands of articles on chaos-based cryptology were published in the past two decades. Unfortunately, many of them have been found
to have security problems of different degree from the viewpoint of modern cryptology
\cite{David:AttackingChaos08,Goce:cryptanalysis:TM08,Li:BreakImageCipher:IVC09,Alvarez:PLA2003,Li:AttackingCNSNS2008,Li:AttackingIVC2009,Zhang:perceptron:ND12}. In general, cryptanalysis (breaking or reporting some security defects) of
a given encryption scheme requires rigorous theoretical support, so only about one tenth of the proposed chaos-based encryption schemes were
reported being cryptanalyzed till now. At the same time, the conventional cryptanalysis skills cannot be directly used to break the chaos-based encryption schemes
due to some reasons, such as difference of essential structure, and some special security defects caused by the properties of the adopted chaos system \cite{SJLi:degradation:IJBC05,ZhouJT:Coder:TCSI11,FChen:PeriodArnold:TIT12}. Therefore, short of security scrutiny
of the proposed chaos-based encryption schemes has become bottleneck of progress of chaos-based cryptology. Some general rules about evaluating
security of chaos-based encryption schemes are concluded in \cite{AlvarezLi:Rules:IJBC2006,Li:ChaosImageVideoEncryption:Handbook2004}.

Logistic map is a discrete quadratic recurrence form of the logistic equation,
a model of population growth first published by P. Verhulst in 1845. The application of logistic map in cryptology
can be traced back to John von Neumann's suggestion on utilizing it as a random number generator in 1947 \cite{Neumann:logistic:BAMS47}.
Due to simple form and relatively complex dynamical properties of logistic map, it was extensively used to design encryption schemes or pseudo-random
number generator \cite{Baptista:Lotistic:PLA98,Phatak:LogisticRNG:PRE95,Kocarev:Logistic:PLA01}. Even in \textit{Web of Science}, one can find
about two hundred records on application of logistic map in cryptography published between 1998 and 2013.
A few papers reported some security deficiencies caused by logistic map \cite{Li:AttackingRCES2008,Persohn:AnalyzeLogistic:CSF12}.

In \cite{Pareek:CNSNS2009}, an image encryption scheme based on the logistic and standard maps was proposed, where the two maps
are iterated to generate some pseudo-random number sequences (PRNS) controlling twice exclusive OR (XOR) operations. In \cite{Rhouma:BreakPareek:CNSNS10},
it is reported that an equivalent key of the scheme can be obtained from only one known/chosen plain-image and the corresponding cipher-image. However,
the equivalent key can only be used to decrypt other cipher-images of smaller or the same size of the known/chosen plain-image. Based on a dynamical property of logistic map on stable distribution of its chaotic states, the present paper re-evaluates the security of the
scheme, and finds that the scope of all the sub-keys can be narrowed much by an efficient brute-force search.

The rest of this paper is organized as follows. Section~\ref{sec:encryptscheme} introduces the image encryption
scheme under study briefly. Our cryptanalytic results are presented
in Sec.~\ref{sec:cryptanalysis} in detail. The last section
concludes the paper.

\section{The image encryption scheme under study}
\label{sec:encryptscheme}

The plaintext encrypted by the image encryption scheme under study
is a RGB true-color image of size $M\times N$ (height$\times$width),
which can be denoted by an $M\times N$ matrix of 3-tuple pixel
values $\mymatrix{I}=\{(R(i,j), G(i,j), $ $B(i,j))\}_{0\leq i\leq M-1 \atop
0\leq j\leq N-1}$. Denoting the cipher image by $\mymatrix{I}'=\{(R'(i,j),
G'(i,j),$ $ B'(i,j))\}_{0\leq i\leq M-1\atop 0\leq j\leq N-1}$, the
image encryption scheme can be described as follows\footnote{To make
the presentation more concise and complete, some notations in the
original paper are modified provided that its essential form kept unchanged.}.

\begin{itemize}
\item
\textit{Secret key}: three floating-point numbers $x_0$, $y_0$, $K$,
and one integer $L$, where $x_0$, $y_0\in(0, 2\pi)$, $K>18$,
$100<L<1100$.

\item
\textit{Initialization}: prepare data for encryption/decryption by
performing the following steps.
\begin{itemize}
\item[a)]
Generate four XORing keys as follows: $\textit{Xkey}(0)=\lfloor
256x_0/(2\pi)\rfloor$, $\textit{Xkey}(1)=\lfloor
256y_0/(2\pi)\rfloor$, $\textit{Xkey}(2)$ $=\lfloor K\bmod 256
\rfloor$, $\textit{Xkey}(3)=(L \bmod 256)$. Then, generate a pseudo-image
$\mymatrix{I}_{\Xkey}=\{(R_{\Xkey}(i,j),$ $G_{\Xkey}(i,j),$ $
B_{\Xkey}(i,j))\}_{0\leq i\leq H-1 \atop 0\leq j\leq W-1}$ by
assigning an $M\times N$ matrix with the four XORing keys repeatedly:\\
$R_{\Xkey}(i,j)=\Xkey(3k\bmod 4)$,\\
$G_{\Xkey}(i,j)=\Xkey((3k+1)\bmod 4)$,\\
$B_{\Xkey}(i,j)=\Xkey((3k+2)\bmod 4)$, where $k=iN+j$.

\item[b)]
Iterate the standard map
\begin{equation*}
\left\{
\begin{array}{lcl}
x   &=&   (x+K\sin(y))\bmod(2\pi),\\
y   &=&  (y+x+K\sin(y))\bmod(2\pi),
\end{array}
\right.
\end{equation*}
from the initial conditions $(x_0,y_0)$ $L$ times to obtain a new chaotic state
$(x_0', y_0')$. Then, further iterate it $MN$ more times to get
$MN$ chaotic states $\{(x_i,y_i)\}_{i=1}^{MN}$.

\item[c)]
Iterate the logistic map
\begin{equation}
\label{eq:logistic} f(x)=4x(1-x)
\end{equation}
from the initial condition $z_0=((x_0'+y_0')\bmod 1)$ for $L$ times to get a new
initial condition $z_0'$. Then, further iterate it for $MN$ times to
get $MN$ chaotic states $\{z_i\}_{i=1}^{MN}$.

\item[d)]
Generate a chaotic key stream (CKS) image $\mymatrix{I}_{\CKS}=\{(\textit{CKSR}(i,j),$ $\textit{CKSG}(i,j),$
$\textit{CKSB}(i,j))\}_{0\leq i\leq M-1 \atop 0\leq j\leq N-1}$ as
follows:
$\textit{CKSR}(i,j)=\left\lfloor
256x_k/(2\pi)\right\rfloor$, $\textit{CKSG}(i,j)$ =$\left\lfloor
256y_k/(2\pi)\right\rfloor$ and
\begin{equation}\label{eq:CKSB}
\textit{CKSB}(i,j)=\tilde{z}_k=\left\lfloor
256z_k\right\rfloor,
\end{equation}
where $k=iN+j+1$.
\end{itemize}

\item
\textit{Encryption procedure}: a simple concatenation of the
following four encryption operations.

\begin{itemize}
\item
\textit{Confusion I}: masking the plain pixel values by the four
XORing keys $\{\textit{Xkey}(i)\}_{i=0}^3$.
For $k=0,\ldots,MN-1$, set
\begin{eqnarray*}
R^{\star}(i,j) & = & R(i,j)\oplus R_{\Xkey}(i,j),\\
G^{\star}(i,j) & = & G(i,j)\oplus G_{\Xkey}(i,j),\\
B^{\star}(i,j) & = & B(i,j)\oplus B_{\Xkey}(i,j),
\end{eqnarray*}
where $i=\lfloor k/N\rfloor$, $j=(k\bmod N)$.

\item
\textit{Diffusion I}: scanning all pixel values from the first one
row by row (from top to bottom), and masking each pixel (except for
the first scanned pixel) by its predecessor in the scan.

Set $R^{*}(0, 0)=R^{\star}(0, 0)$, $G^{*}(0, 0)=G^{\star}(0, 0)$,
$B^{*}(0, 0)=B^{\star}(0, 0)$. For $k=1,\ldots,MN-1$, set
\begin{eqnarray*}
R^{*}(i, j) & = &  R^{\star}(i, j) \oplus R^{*}(i', j'),\\
G^{*}(i, j) & = &  G^{\star}(i, j) \oplus G^{*}(i', j'),\\
B^{*}(i, j) & = &  B^{\star}(i, j) \oplus B^{*}(i', j'),
\end{eqnarray*}
where $i=\lfloor k/N\rfloor$, $j=(k\bmod N)$, $i'=\lfloor
(k-1)/N\rfloor$ and $j'=((k-1)\bmod N)$.

\item
\textit{Diffusion II}: scanning all pixel values from the last one
column by column (from right to left), and masking each pixel
(except for the first scanned pixel) by its predecessor in the scan.

Set $R^{**}(M-1, N-1)=R^{*}(M-1, N-1)$, $G^{**}(M-1, N-1)=G^{*}(M-1,
N-1)$, $B^{**}(M-1, N-1)=B^{*}(M-1, N-1)$. For $k=MN-2,\ldots,0$, set
\begin{eqnarray*}
R^{**}(i, j)  =  R^{*}(i, j)\oplus G^{**}(i',j')\oplus B^{**}(i',j'),\\
G^{**}(i, j)  =  G^{*}(i, j)\oplus B^{**}(i',j')\oplus R^{**}(i',j'),\\
B^{**}(i, j)  =  B^{*}(i, j)\oplus R^{**}(i',j')\oplus
G^{**}(i', j'),
\end{eqnarray*}
where $i=(k\bmod M)$, $j=\lfloor k/M \rfloor$, $i'=((k+1)\bmod M)$,
$j'=\lfloor (k+1)/M \rfloor$.

\item
\textit{Confusion II}: masking the pixel values with the CKS image
pixel by pixel.
For $k=0,\ldots,MN-1$, set
\begin{eqnarray*}
R'(i,j) & = & R^{**}(i,j)\oplus \textit{CKSR}(i,j),\\
G'(i,j) & = & G^{**}(i,j)\oplus \textit{CKSG}(i,j),\\
B'(i,j) & = & B^{**}(i,j)\oplus \textit{CKSB}(i,j).
\end{eqnarray*}
where $i=\lfloor k/N\rfloor$, $j=(k\bmod N)$.
\end{itemize}

\item
\textit{Decryption procedure} is the simple reversion of the above
encryption procedure.
\end{itemize}

\section{Cryptanalysis}
\label{sec:cryptanalysis}

In \cite[Sec.~4.2]{Rhouma:BreakPareek:CNSNS10}, it has been reported
that an equivalent secret key of the image encryption scheme under study can be reconstructed with only one
pair of known plain-image and the corresponding cipher-image. More rigorous presentation of the attack was represented in \cite[Sec.~3.1]{Li:BreakPareek2:CNSNS11}. The equivalent form of the secret key can only decrypt other cipher-image of smaller or the same size of
the known plain-image. In this section, we first briefly introduce how the equivalent secret key is obtained, then discuss how to derive
further information about secret key utilizing special dynamical properties of the logistic map.

Denoting the horizontal and vertical diffusion processes by HD and
VD, respectively, the image encryption scheme under study can be represented as
\begin{IEEEeqnarray*}{rCl}
\mymatrix{I}' &=& \VD(\HD(\mymatrix{I}\oplus\mymatrix{I}_{\Xkey}))\oplus\mymatrix{I}_{\CKS} \nonumber\\
              &=& \VD(\HD(\mymatrix{I}))\oplus \VD(\HD(\mymatrix{I}_{\Xkey}))\oplus \mymatrix{I}_{\CKS}\\
              &=& \VD(\HD(\mymatrix{I}))\oplus \mymatrix{I}_{\key},
\end{IEEEeqnarray*}
where
\begin{equation}
\mymatrix{I}_{\key}=\mymatrix{I}_{\Xkey}^*\oplus \mymatrix{I}_{\CKS},
\label{eq:equivalentkey}
\end{equation} and
\begin{equation*}
\mymatrix{I}_{\Xkey}^*=\VD(\HD(\mymatrix{I}_{\Xkey}).
\end{equation*}
As both HD and VD are independent on the secret key and $\mymatrix{I}_{\key}$
is dependent on neither the plaintext
$\mymatrix{I}$ nor the ciphertext $\mymatrix{I}'$, one can see that
$\mymatrix{I}_{\key}$ can be used as an equivalent key to decrypt any
ciphertext of size which is smaller than or equal to $M\times N$, encrypted by the same secret key $(x_0,y_0,K,L)$. In the following, we will show that the scope of $x_0$, $y_0$, $K$, and $L$ can be further narrowed under the same condition.

The key idea of the enhanced attack is to search the values of $\{\textit{Xkey}(i)\}_{i=0}^3$ by brute-force
and verify the search with stable self-correlation of the blue channel of $\mymatrix{I}_{\CKS}$, $\{\textit{CKSB}(i,j)\}_{0\leq i\leq M-1 \atop 0\leq j\leq N-1}$,
which is obtained from states of logistic map. Two main points supporting the attack are described as follows.
\begin{itemize}
\item \textit{Deterministic relationship between $\{\textit{Xkey}(i)\}_{i=0}^3$ and $\mymatrix{I}_{\Xkey}^*$}

Obviously, every element of $\mymatrix{I}_{\Xkey}^*$ comes from set $\{\textit{Xkey}(0), \textit{Xkey}(1), \textit{Xkey}(2), \textit{Xkey}(3)$,
$\textit{Xkey}(0)\oplus$ $\textit{Xkey}(1)$, $\textit{Xkey}(0)\oplus\textit{Xkey}(2), \textit{Xkey}(0)\oplus\textit{Xkey}(3), \textit{Xkey}(1)$
$\oplus\textit{Xkey}(2)$,
$\textit{Xkey}(1)\oplus\textit{Xkey}(3), \textit{Xkey}(2)\oplus\textit{Xkey}(3)$, $\textit{Xkey}(0)\oplus\textit{Xkey}(1)\oplus \textit{Xkey}(2)$, $
\textit{Xkey}(0)\oplus\textit{Xkey}(1)\oplus \textit{Xkey}(3)$, $\textit{Xkey}(1)\oplus\textit{Xkey}(2)\oplus \textit{Xkey}(3)$, $\textit{Xkey}(0)\oplus\textit{Xkey}(2)\oplus \textit{Xkey}(3)$,
$\textit{Xkey}(0)\oplus\textit{Xkey}(1)\oplus \textit{Xkey}(2)\oplus \textit{Xkey}(3)\}$. Given $M$, $N$ and $\{\textit{Xkey}(i)\}_{i=0}^3$, $\mymatrix{I}_{\Xkey}^*$ is fixed. Assigning $\{\textit{Xkey}(i)\}_{i=0}^3$
with four different numbers, e.g., $\{1, 2, 4, 9\}$, the construction of $\mymatrix{I}_{\Xkey}^*$ can be known. Scan $\mymatrix{I}_{\Xkey}^*$ in the raster order, and convert it into a one-dimensional sequence. Note that not every element of $\{\textit{Xkey}(i)\}_{i=0}^3$ have independent influence on $\mymatrix{I}_{\Xkey}^*$ when $T=4$, where
$T$ denote period of the one-dimensional version of $\mymatrix{I}_{\Xkey}^*$. For example, the non-periodic component of $\mymatrix{I}_{\Xkey}^*$ is
$\{\textit{Xkey}(2)$, $\textit{Xkey}(2)\oplus \textit{Xkey}(3)$, $\textit{Xkey}(0)\oplus\textit{Xkey}(1)\oplus\textit{Xkey}(3)$, $\textit{Xkey}(0)\oplus\textit{Xkey}(1)\}$ when $M=N=9$,
where different combinations of $\textit{Xkey}(0)$ and $\textit{Xkey}(1)$ may generate the same version of $\mymatrix{I}_{\Xkey}^*$.
Under condition $M=N$, the relationship between $T$ and $N$ is shown in Proposition~\ref{prop:period}. Fortunately, from Proposition~\ref{prop:condition}, one can see that only a
very minor portion of combinations of $M$ and $N$ making not every element of $\{\textit{Xkey}(i)\}_{i=0}^3$ have independent influence on $\mymatrix{I}_{\Xkey}^*$.

\item \textit{Stable self-correlation of $\{\textit{CKSB}(i,j)\}_{0\leq i\leq M-1 \atop 0\leq j\leq N-1}$}

Recall Eq.~(\ref{eq:CKSB}), one can calculate $\Delta_k=\mu_k-4$, where
\begin{equation*}
\mu_k=(\tilde{z}_{k+1}/256)/((\tilde{z}_k/256)\cdot(1-(\tilde{z}_k/256))).
\end{equation*}
As well known, distribution of chaotic trajectories generated by iterating the map Eq.~(\ref{eq:logistic}) is stable. To verify this,
distributions of a great number of chaotic trajectories generated by iterating Logistic map Eq.~(\ref{eq:logistic}) for $10^5$ times under random initial conditions were
studied. All the distributions are quite similar to each other, so only one typical example is shown in Fig.~\ref{fig:distribute4logistic}
(See the invariant density of logistic map with parameter $r=4$ in \cite[Fig.~2]{Oteo:DoublePlogistic:PRE07}). Furthermore, distribution of $\Delta_k$ is stable also under different values of $z_0$. To verify this point, a great number of initial values of $z_0$ are chosen randomly. For each value of $z_0$, the corresponding
sequence $\{\tilde{z}_k\}$ is produced by transforming the corresponding logistical states with Eq.~(\ref{eq:CKSB}). As distributions of $\Delta_k$ are similar to each other, only the distribution of $\Delta_k$ under the initial value used in Fig.~\ref{fig:distribute4logistic} is shown in Fig.~\ref{fig:z0z1}. To show the point more clearly, the distribution of $\Delta_k$ over some given intervals are shown in
Table \ref{table:distributionerror}. In addition,
distribution of $\Delta_k$ is very sensitive to change of $\tilde{z}_k$ or $\tilde{z}_{k+1}$ (See Figs.~\ref{fig:z0z1},~\ref{fig:z0+1z1},~\ref{fig:z0z1+1},~\ref{fig:z0+1z1+1}).
So, this stable self-correlation of $\{\textit{CKSB}(i,j)\}_{0\leq i\leq M-1 \atop 0\leq j\leq N-1}$ can be used to verify the search of $\{\textit{Xkey}(i)\}_{i=0}^3$.
\end{itemize}

\begin{proposition}
When $M=N$, $T$ and $N$ satisfies that
\begin{equation}
 T= \begin{cases}
   4   &   \mbox{if }(N\bmod 2)=1;\\
   2N  &   \mbox{if }(N\bmod 4)=0;\\
   4N  &   \mbox{if }(N\bmod 4)=2.
  \end{cases}
\end{equation}
\label{prop:period}
\end{proposition}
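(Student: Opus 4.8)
The plan is to work entirely over $GF(2)$, treating each of the four values $\Xkey(0),\dots,\Xkey(3)$ as a formal generator and every pixel value of $\mymatrix{I}_{\Xkey}^*$ as a $GF(2)$-linear combination of them; since every operation in $\HD$ and $\VD$ is a bitwise XOR, this is exact. Choosing the four keys to be linearly independent as bit-vectors (as in the sample assignment $\{1,2,4,9\}$) guarantees that two positions carry the same numerical value precisely when they carry the same symbolic combination, so the period $T$ of the one-dimensional readout equals the period of the symbolic sequence. First I would record the observation that makes the input tractable: writing the interleaved raster index as $p=3k+c$ with channel $c\in\{0,1,2\}$ and pixel index $k$, the defining formulas give value $\Xkey(p\bmod 4)$ at position $p$, so $\mymatrix{I}_{\Xkey}$ itself is purely periodic with period $4$.

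Next I would push this period-$4$ input through $\HD$. Since $\HD$ is, within each channel, a global prefix XOR along the raster order (it does not reset at row ends), the $R$-channel value at pixel $k$ is $\bigoplus_{m=0}^{k}\Xkey(3m\bmod 4)$, and similarly for $G,B$. Because the summand is $4$-periodic in $m$, this prefix splits into a closed form depending only on $k\bmod 4$ and on the parity of $\lfloor k/4\rfloor$, the latter contributing the full-block sum $S=\Xkey(0)\oplus\Xkey(1)\oplus\Xkey(2)\oplus\Xkey(3)$ once per two completed periods. This yields an explicit description of $\HD(\mymatrix{I}_{\Xkey})$ whose only non-periodic ingredient is the parity term $S$.

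The $\VD$ stage is where the useful algebra appears. Writing the three channels of a pixel as a vector over $GF(2)$, $\VD$ is the backward column-major recursion $V(\kappa)=U(\kappa)\oplus \mymatrix{M}\,V(\kappa+1)$ with
\begin{equation*}
\mymatrix{M}=\begin{pmatrix}0&1&1\\1&0&1\\1&1&0\end{pmatrix},
\end{equation*}
where $\kappa=jM+i$ is the column-major index and $U=\HD(\mymatrix{I}_{\Xkey})$. The key point is that $\mymatrix{M}$ is idempotent over $GF(2)$, $\mymatrix{M}^2=\mymatrix{M}$, so unrolling the recursion collapses every higher power and gives the closed form $V(\kappa)=U(\kappa)\oplus \mymatrix{M}\,Q(\kappa)$, where $Q(\kappa)=\bigoplus_{\mu>\kappa}U(\mu)$ is the column-major suffix XOR. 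Thus $\VD$ adds to each channel a single correction built from suffix sums of the other two channels, and no genuine three-term coupling survives. Finally, under the hypothesis $M=N$ the raster index $iN+j$ and the column-major index $jN+i$ are exact transposes, which is what lets the suffix sums be read back in raster order cleanly, and explains why a clean formula is available only for the square case.

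Assembling $V=U\oplus\mymatrix{M}\,Q$ and reading it back in raster order, I would determine $T$ by asking for the smallest shift of the raster index under which both the $\HD$ term and the correction $\mymatrix{M}\,Q$ reproduce themselves. The $\HD$ term contributes the period-$4$ phase together with the $S$-parity, while the correction contributes a dependence on $N$ through the XOR of the period-$4$ pattern over a full column of $N$ entries. I expect the main obstacle to be exactly this column-sum bookkeeping: the residues $(iN+j)\bmod 4$ for $i=0,\dots,N-1$ behave qualitatively differently according to $N\bmod 4$ --- constant phase when $4\mid N$, two alternating phases when $N\equiv 2\pmod 4$, and all four phases (coprime stepping) when $N$ is odd --- and it is precisely how many copies of the block-sum $S$ survive these column XORs, together with the parity term from $\HD$, that forces $T=4$ in the odd case and makes it grow to $2N$ or $4N$ in the two even cases. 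Carefully tracking which contributions cancel in each residue class, and confirming minimality of the stated $T$ (rather than merely that it is a period), is the delicate part of the argument.
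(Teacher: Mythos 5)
Your algebraic groundwork is correct, and I could verify each structural claim it rests on: the interleaved flattening of $\mymatrix{I}_{\Xkey}$ is $\Xkey(p\bmod 4)$ at position $p$; $\HD$ acts per channel as a raster-order prefix XOR with no reset at row ends, so the $R$-channel value at pixel $k$ is $\bigoplus_{m=0}^{k}\Xkey(3m\bmod 4)$ and likewise for $G$, $B$; and the cross-channel matrix $\mymatrix{M}$ of $\VD$ is indeed idempotent over $GF(2)$ (it equals $J+I$ with $J$ the all-ones matrix, and $J^2=3J=J$ mod $2$), which legitimately collapses the backward recursion to $V(\kappa)=U(\kappa)\oplus\mymatrix{M}\,Q(\kappa)$ with $Q$ the column-major suffix XOR. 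The device of assigning linearly independent bit-vectors such as $\{1,2,4,9\}$, so that the numerical period equals the symbolic one, is also sound. The genuine gap is that the proposal ends exactly where the proposition starts: no part of the argument actually derives $T=4$, $T=2N$ or $T=4N$ in the three residue classes, and no minimality argument is given; you yourself flag the column-sum bookkeeping and the minimality check as "the delicate part" and leave both undone. Worse, the strategy you announce for that step --- find the smallest shift under which the $\HD$ term and the correction $\mymatrix{M}\,Q$ \emph{separately} reproduce themselves --- cannot establish minimality even if carried out: joint periodicity of the two summands only shows that the stated $T$ is \emph{a} period of their XOR, i.e.\ an upper bound on the minimal period; the true period could be smaller because defects of the two terms may cancel, so a lower-bound argument about the combined sequence is unavoidable.

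One thing you could not have known but that matters for the comparison: the paper does not prove this proposition either. Its "proof" is a computer verification for $N=3,\ldots,1204$, with the statement for larger $N$ explicitly left as a conjecture. So your approach is not a variant of the paper's argument; it is an attempt at something strictly stronger, and if completed --- explicit computation of $Q$ column by column using your (correct) observation that the residues $(iN+j)\bmod 4$ along a column are constant, two-valued, or all four values according to $N\bmod 4$; exhibition of the stated $T$ as a period; then refutation of every smaller candidate period at explicit symbolic positions --- it would upgrade the paper's conjecture to a theorem. As written, however, the proposal establishes neither the proposition nor even its restriction to a finite range of $N$, which the paper's empirical check at least does.
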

\begin{proof}
This proposition has been verified by computer for $N=3, \cdots, 1204$. We leave
this proposition under the larger scope of $N$ as a conjecture.
\end{proof}

\begin{proposition}
The necessary condition making not every element of $\{\textit{Xkey}(i)\}_{i=0}^3$
have independent influence on $\mymatrix{I}_{\Xkey}^*$
is that
\begin{equation}
|M-N|\leq 1.
\end{equation}
\label{prop:condition}
\end{proposition}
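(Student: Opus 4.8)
The plan is to treat the entire construction as a linear map over $\mathrm{GF}(2)$ and to read ``independent influence'' as injectivity. Every operation feeding $\mymatrix{I}_{\Xkey}^*$ (the periodic assignment of the four keys, $\HD$, and $\VD$) is bitwise XOR, so the eight bit-planes evolve identically and independently; I fix one plane and write $a_i$ for the corresponding bit of $\Xkey(i)$, making $(a_0,a_1,a_2,a_3)\mapsto\mymatrix{I}_{\Xkey}^*$ an XOR-linear map. Here one must be careful: as transformations of the whole image, $\HD$ (a raster prefix-XOR) and $\VD$ (a reverse recurrence that is clearly solvable for its input) are both invertible, so the map onto the \emph{full} $\mymatrix{I}_{\Xkey}^*$ is injective for every $M,N$ and its kernel is trivially zero. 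The proposition can therefore only concern the quantity the attack actually exploits, namely the blue channel $\textit{CKSB}$, i.e.\ the blue plane $B^{**}$ of $\mymatrix{I}_{\Xkey}^*$ (equivalently, the shortening of its raster period). So the object to study is the \emph{projected} map $(a_0,a_1,a_2,a_3)\mapsto B^{**}$, and I would prove the contrapositive: if $|M-N|\ge 2$ then this projected map has trivial kernel.

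The first real step is to put $\HD$ and $\VD$ into closed form on this subspace. In the interleaved RGB raster the input is the $4$-periodic sequence whose $m$-th entry is $a_{m\bmod 4}$, so each channel of $\HD(\mymatrix{I}_{\Xkey})$ is a row-major prefix-XOR of a $4$-periodic pattern; it is $4$-quasiperiodic, with exact period $4$ when $s:=a_0\oplus a_1\oplus a_2\oplus a_3=0$ and with one extra copy of $s$ every four samples (period $8$) when $s=1$. For $\VD$ I would record a decisive lemma: adding its three defining equations shows the per-pixel channel sum $R\oplus G\oplus B$ is preserved, while on the complementary ``sum-zero'' plane the mixing matrix acts as the identity, so the sum-zero part of $\VD$ is merely a reverse prefix-XOR in the column-major scan order. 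Indexing pixels by $k$ in that scan order, this yields the compact expression $B^{**}_k=S^*_k\oplus\bigoplus_{m\ge k}(R^*_m\oplus G^*_m)$, where $S^*_k=R^*_k\oplus G^*_k\oplus B^*_k$, which is far easier to analyze than the coupled recurrence as written.

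With this formula a kernel vector must satisfy $B^{**}_k=0$ for all $k$, and I would linearize by differencing consecutive pixels in the scan. Two kinds of equation appear. The bulk equations relate two pixels in the same column, whose row-major indices differ by $N$; they are governed purely by $N\bmod 4$ together with the drift $s$, and solving them first pins down the admissible residue pattern and the value of $s$ a kernel vector may carry. The boundary equations arise at the $N-1$ column transitions, where the scan wraps from row $M-1$ back to row $0$; these carry the dependence on $M$, and a common nonzero solution of bulk-plus-boundary (together with the single global normalization $B^{**}_{MN-1}=B^*_{MN-1}$) is exactly what a nontrivial kernel demands. For the diagonal case $M=N$ the period bookkeeping is already fixed by Proposition~\ref{prop:period}, which tells me precisely when the bulk part degenerates; the genuinely new work is the off-diagonal regime.

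The hard part will be the boundary analysis, for a reason worth stating plainly: the bulk equations are only residue conditions modulo $4$, so they \emph{cannot by themselves} bound $|M-N|$ (e.g.\ they are insensitive to scaling a column pattern across many rows), and the entire strength of the claimed threshold must therefore come from the boundary-plus-normalization equations. My expectation is that the bulk step confines a candidate kernel vector to a one- or two-dimensional residue class, after which the $N-1$ column-wrap conditions over-determine it and force it to zero unless the column height $M$ and row width $N$ are matched to within one, giving $|M-N|\le 1$. The real difficulty is that this must be verified across all residues of $M$ and $N$ modulo $4$ and across both drift regimes $s\in\{0,1\}$, while simultaneously enforcing the global consistency of the column-major reverse prefix-XOR; it is precisely this finite but intricate cross-residue case analysis that resists a single clean closed form, which is why I expect the argument to become technical exactly here, mirroring the computer-assisted status of Proposition~\ref{prop:period}.
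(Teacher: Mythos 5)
Your proposal is not a proof; it is a program for one, and the step it omits is the entire content of the proposition. The setup is sound as far as it goes: your lemma that $\VD$ preserves the per-pixel channel sum and acts as the identity on the sum-zero plane is correct, your closed form $B^{**}_k=S^*_k\oplus\bigoplus_{m\ge k}(R^*_m\oplus G^*_m)$ checks out, and the period-$4$/period-$8$ dichotomy of the per-channel prefix-XOR according to the parity bit $s$ is right. But the claimed threshold --- that the bulk-plus-boundary system has a nontrivial kernel only when $|M-N|\le 1$ --- is precisely what you defer with ``my expectation is that\ldots''. You never extract $|M-N|\le 1$ from the column-wrap equations; you only predict that a cross-residue case analysis would yield it. You should also know what the paper itself offers here: its ``proof'' of Proposition~\ref{prop:condition} is a computer verification for $3\le M,N\le 1204$, with the general statement explicitly left as a conjecture (the same is true of Proposition~\ref{prop:period}, which your plan leans on). So there is no analytical argument in the paper that could validate your predicted endgame, and your sketch, as written, establishes nothing beyond plausibility --- though if completed it would be strictly stronger than what the authors provide.

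There is a second, structural problem: you may be analyzing the wrong object. Your invertibility observation is correct and has real force --- since $\HD$ and $\VD$ are bijections and all four keys appear verbatim in $\mymatrix{I}_{\Xkey}$, the map $(\Xkey(0),\ldots,\Xkey(3))\mapsto\mymatrix{I}_{\Xkey}^*$ is injective for every $M,N\ge 2$, so the proposition read literally is vacuous, and the paper's own claim that different keys can yield the same $\mymatrix{I}_{\Xkey}^*$ when $M=N=9$ cannot be literally true. But your resolution --- declaring that the proposition ``must'' be about the blue plane $B^{**}$ --- is a guess, and it does not match the paper's surrounding discussion: there, ``independent influence'' is tied to the period $T$ of the \emph{interleaved} one-dimensional scan of all three channels (Proposition~\ref{prop:period}) and to the period-$4$ repeating block $\{\Xkey(2),\;\Xkey(2)\oplus\Xkey(3),\;\Xkey(0)\oplus\Xkey(1)\oplus\Xkey(3),\;\Xkey(0)\oplus\Xkey(1)\}$, which depends on only three independent key combinations. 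The intended failure mode is evidently that this periodic part (the portion the attack actually matches, boundary bytes aside) under-determines the key, not that the blue-plane projection has a kernel. Before any boundary analysis, you would therefore need a preliminary lemma showing your projected-injectivity criterion is equivalent to, or at least implies, the periodic-block criterion for all residues of $M$ and $N$; without it, even a completed version of your argument might prove a theorem adjacent to, but different from, the one stated.
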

\begin{proof}
This proposition has been verified by computer for $3\leq M, N\leq 1204$. We leave
this proposition under the larger scope of $M$ and $N$ as a conjecture.
\end{proof}

\begin{figure}[!htb]
\centering
\begin{minipage}[t]{\imagewidth}
\centering
\includegraphics[width=\imagewidth]{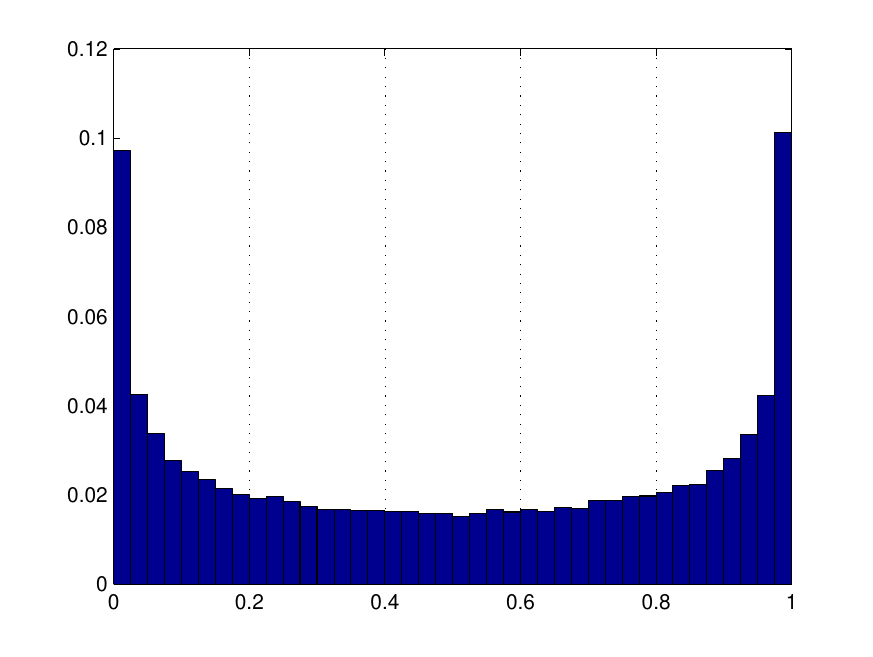}
\end{minipage}
\caption{Distribution of trajectory of the map Eq.~(\ref{eq:logistic}) under initial state $z_0=0.226$.}
\label{fig:distribute4logistic}
\end{figure}

\begin{figure}[!htb]
\centering
\begin{minipage}[t]{\imagewidth}
\centering
\includegraphics[width=\imagewidth]{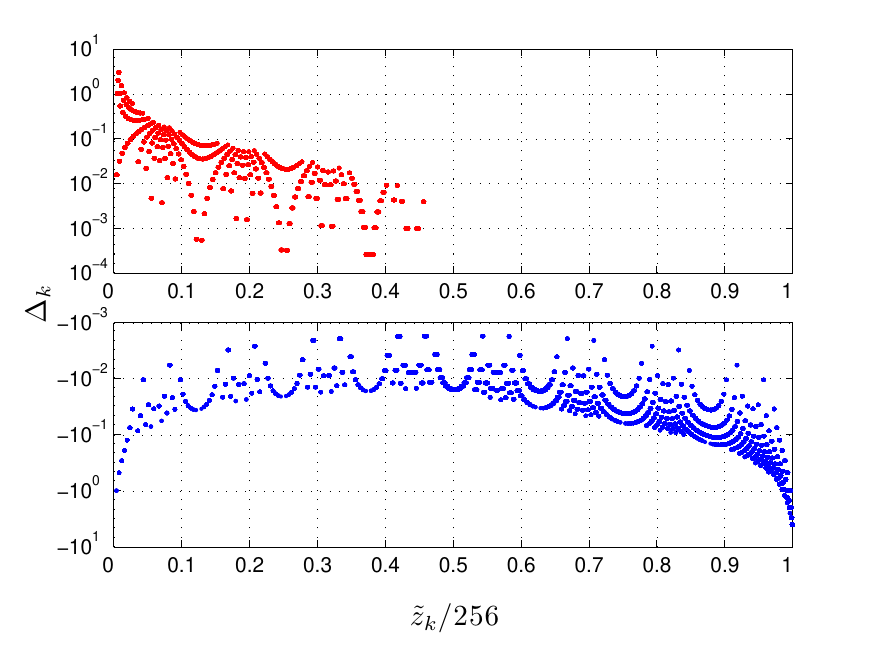}
\end{minipage}
\caption{Distribution of  $\{(\tilde{z}_{k}/256, \Delta_k)\}_{k=1}^{10^5}$ when $z_0=0.226$.}
\label{fig:z0z1}
\end{figure}

\begin{figure}[!htb]
\centering
\begin{minipage}[t]{\imagewidth}
\centering
\includegraphics[width=\imagewidth]{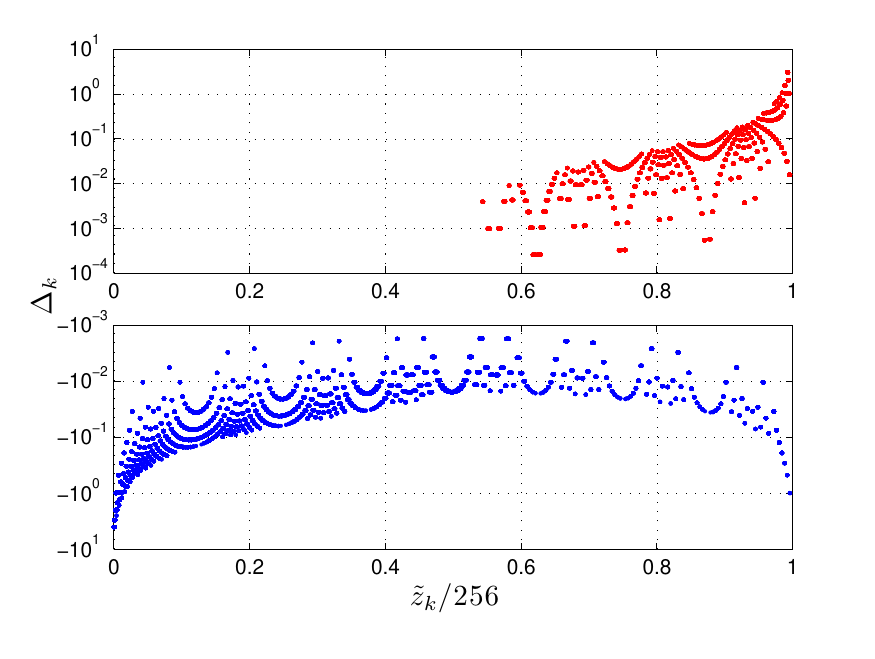}
\end{minipage}
\caption{The version of the distribution shown in Fig.~\ref{fig:z0z1} when only $\tilde{z}_{k}$ is wrong: $\tilde{z}_{k}$ is replaced by $\tilde{z}_{k}+1$.}
\label{fig:z0+1z1}
\end{figure}

\begin{figure}[!htb]
\centering
\begin{minipage}[t]{\imagewidth}
\centering
\includegraphics[width=\imagewidth]{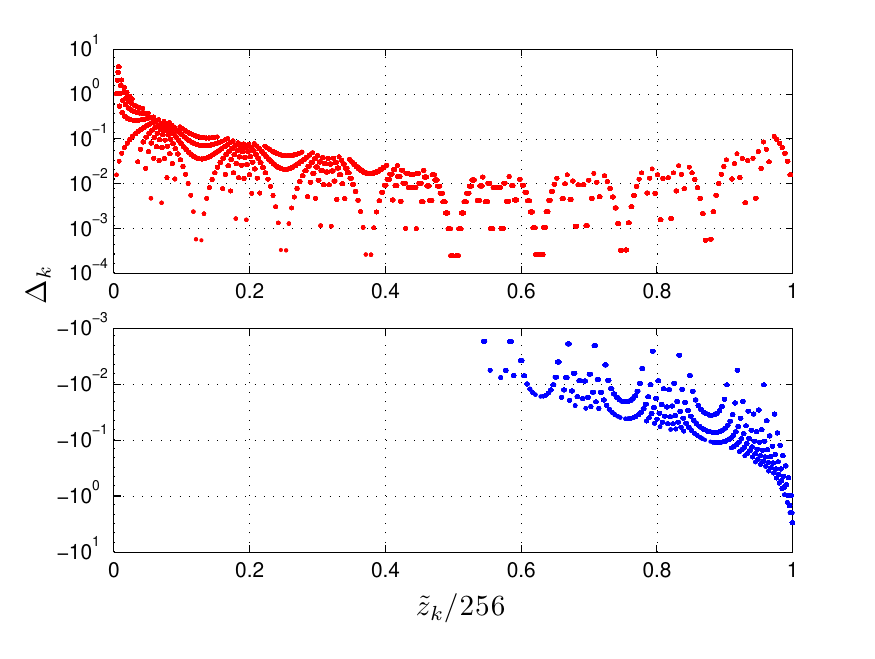}
\end{minipage}
\caption{The version of the distribution shown in Fig.~\ref{fig:z0z1} when only $\tilde{z}_{k+1}$
is wrong: $\tilde{z}_{k+1}$ is replaced by $\tilde{z}_{k+1}+1$.
}
\label{fig:z0z1+1}
\end{figure}

\begin{figure}[!htb]
\centering
\begin{minipage}[t]{\imagewidth}
\centering
\includegraphics[width=\imagewidth]{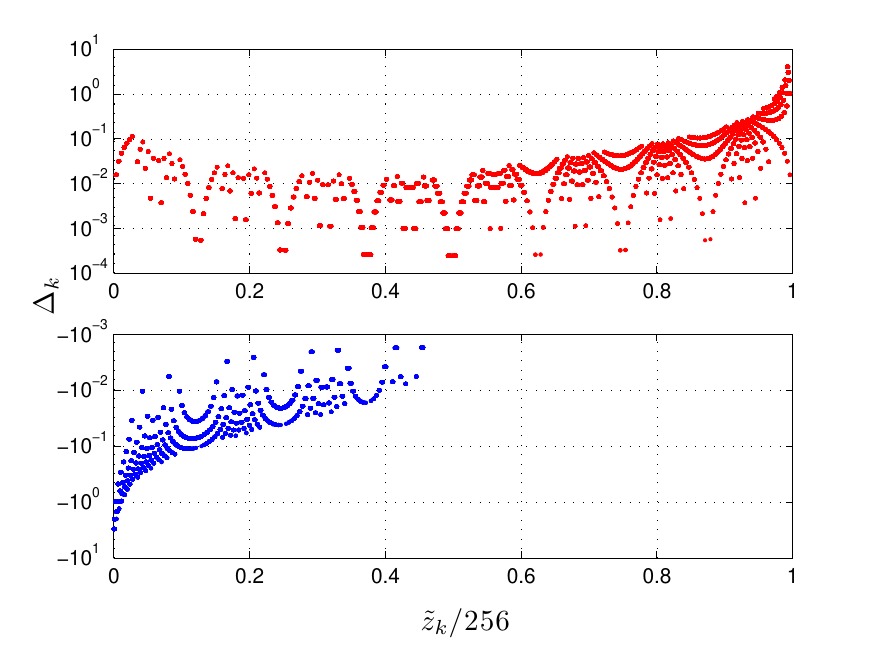}
\end{minipage}
\caption{The version of the distribution shown in Fig.~\ref{fig:z0z1} when $\tilde{z}_{k+1}$
and $\tilde{z}_{k}$ are birth wrong: $\tilde{z}_{k}$ and $\tilde{z}_{k+1}$ are replaced by $\tilde{z}_{k}+1$ and $\tilde{z}_{k+1}+1$ respectively.}
\label{fig:z0+1z1+1}
\end{figure}

\begin{table*}[!htp]
\center\caption{The distribution of $\Delta_k$ corresponding to the initial state $z_0=0.226$.}
\begin{tabular}{c|c|c|c|c|c}
  \hline  $z_0$  & $(-\infty, -10)$     &  $[-10^{i+1}, -10^{i})_{i=0}^{-4}$       & $[-10^{-4}, 10^{-4})$ &    $[10^{i}, 10^{i+1})_{i=-4}^0$& $[10,+\infty )$\\
  \hline  0.326  & 0.0195 & 0.0474    0.1435    0.3216    0.0897   0 & 0.0267 &   0.0150    0.0672    0.1588    0.0709    0.0196        & 0.0202       \\
  \hline  0.761  & 0.0204 & 0.0481    0.1431    0.3177    0.0899   0 & 0.0274 &   0.0143    0.0678    0.1610    0.0709    0.0197        & 0.0198       \\
  \hline  0.539  & 0.0203 & 0.0469    0.1437    0.3222    0.0905   0 & 0.0270 &   0.0141    0.0672    0.1581    0.0705    0.0192        & 0.0203       \\
  \hline  0.487  & 0.0200 & 0.0480    0.1412    0.3230    0.0905   0 & 0.0265 &   0.0146    0.0660    0.1593    0.0716    0.0193        & 0.0201       \\
  \hline  0.194  & 0.0206 & 0.0491    0.1419    0.3199    0.0879   0 & 0.0276 &   0.0150    0.0666    0.1599    0.0719    0.0192        & 0.0204       \\
  \hline  0.875  & 0.0210 & 0.0484    0.1430    0.3201    0.0893   0 & 0.0271 &   0.0152    0.0662    0.1603    0.0698    0.0198        & 0.0199       \\
  \hline  0.942  & 0.0211 & 0.0483    0.1417    0.3217    0.0877   0 & 0.0270 &   0.0154    0.0663    0.1597    0.0712    0.0194        & 0.0205       \\
  \hline  0.293  & 0.0205 & 0.0479    0.1426    0.3201    0.0905   0 & 0.0267 &   0.0147    0.0662    0.1597    0.0725    0.0187        & 0.0199       \\
  \hline
\end{tabular}
\label{table:distributionerror}
\end{table*}

Based on the above two points, some information of the secret key, $\{\textit{Xkey}(i)\}_{i=0}^3$, can be obtained with the following steps.
\begin{enumerate}
\item Obtain $\mymatrix{I}_{\key}$ via
\begin{equation*}
\mymatrix{I}_{\key}= \VD(\HD(\mymatrix{I}))\oplus \mymatrix{I}';
\end{equation*}

\item Search the value of $\{\textit{Xkey}(i)\}_{i=0}^3$, and get the corresponding version of $\mymatrix{I}_{\Xkey}^*$;

\item Generate estimated version of $\{\textit{CKSB}(i,j)\}_{0\leq i\leq M-1 \atop 0\leq j\leq N-1}$,
$\{\textit{CKSB}^*(i,j)\}_{0\leq i\leq M-1 \atop 0\leq j\leq N-1}$, from the blue channel of
$\mymatrix{I}^*_{\CKS}=\mymatrix{I}_{\key}\oplus\mymatrix{I}_{\Xkey}^*$ (See Eq.~(\ref{eq:equivalentkey})).

\item Calculate the distribution of $\{\Delta_k\}_{k=0}^{MN-2}$, and output the search value if the distribution match with the expected one (Refer to Table~\ref{table:distributionerror}).
\end{enumerate}

Once $\{\textit{Xkey}(i)\}_{i=0}^3$ are determined, one can obtain $K=\textit{Xkey}(2)\bmod 256$, $L=\textit{Xkey}(3)+n\cdot 256$, $n=1\sim 4$, $x_0\in [\textit{Xkey}(0)/256\cdot(2\pi), (\textit{Xkey}(0)+1)/256\cdot(2\pi))$
, $y_0\in [\textit{Xkey}(1)/256\cdot(2\pi), (\textit{Xkey}(1)+1)/256\cdot(2\pi))$.

Obviously, the complexity of the whole attack is $O(2^{4\cdot 8}\cdot L)$, where $L$ is the number of plain-bytes to be calculated.
Considering $L=256$ is enough for counting the distribution of $\{\Delta_k\}_{k=0}^{MN-2}$, the complexity of the attack is $O(2^{40})$. Note that the complexity of
the attack can be reduced much by dividing the attack into the following two stages: 1) search for the possible values of $\{\textit{Xkey}(i)\}_{i=0}^3$ of regular interval with a weaker matching condition;
2) verify the left possible values of $\{\textit{Xkey}(i)\}_{i=0}^3$ with a stronger matching condition. In our experiments, only the even possible values of $\{\textit{Xkey}(i)\}_{i=0}^3$
are searched, the matching condition is set as
\begin{equation*}
\#\left( \{k \mid |\Delta_k|<1/16\} \right)>(MN/2),
\end{equation*}
where $\#(\cdot)$ denotes cardinality of a set.
In this case, the complexity is reduced to $O(2^{36})$. For a PC with CPU of 2.83GHz and RAM of 2.98GB, the attack can be
completed within eleven minutes. A number of experiments were made to verify the correctness of the
above attack. With the secret key $(x_0, y_0, K,
L)=(3.98235562892545,$ $ 1.34536356538912,$ $ 108.54365761256745,$ $
110)$, the out of the first stage of the attack is $\{160,54,108,108\}$, $\{162,54,108,110\}$. Then, the 16 possible neighboring values of each
set are verified further, the distributions of $\Delta_k$ correspondig for the second set are shown in Table~\ref{table:result}. Obviously, the
data shown in third row of Table~\ref{table:result} is closed to the data shown in Table~\ref{table:distributionerror} most. So,  $\{\textit{Xkey}(i)\}_{i=0}^3=\{162, 54, 109, 110\}$.
One can further obtain that $K=109\bmod 256$, $L=110+n\cdot 256$, $n=1\sim 3$, $x_0\in [81/64\cdot \pi, 163/128\cdot \pi)$, $y_0\in [27/64\cdot\pi, 55/128\cdot\pi)$.

\begin{table*}[!htp]
\center\caption{The distribution of $\Delta_k$ corresponding to 16 possible neightouring values of \{162,54,108,110\}.}
\begin{tabular}{c|c|c|c}
  \hline  $index$      &  $[-10^{i+1}, -10^{i})_{i=0}^{-4}$       & $[-10^{-4}, 10^{-4})$ &    $[10^{i}, 10^{i+1})_{i=-4}^0$\\
  \hline   1  &  0.0629 0.1614 0.2716 0.0629 0.0000 &0.0157 &0.0078 0.1023 0.1811 0.1023 0.0275  \\
  \hline   2  &  0.0515 0.1666 0.2698 0.0238 0.0000 &0.0357 &0.0119 0.0753 0.2023 0.1230 0.0357  \\
  \hline   3  &  0.0476 0.1468 0.3373 0.0714 0.0000 &0.0357 &0.0119 0.0992 0.1626 0.0714 0.0119  \\\hline
  \hline   4  &  0.0480 0.1600 0.3240 0.0480 0.0000 &0.0240 &0.0080 0.0720 0.2000 0.0800 0.0320  \\
  \hline   5  &  0.0557 0.1474 0.3187 0.0557 0.0000 &0.0438 &0.0119 0.0876 0.1434 0.1235 0.0079  \\
  \hline   6  &  0.0478 0.1673 0.2709 0.0557 0.0000 &0.0239 &0.0039 0.0876 0.2151 0.0956 0.0278  \\
  \hline   7  &  0.0632 0.1660 0.2885 0.0553 0.0000 &0.0197 &0.0079 0.0869 0.1699 0.1067 0.0316  \\
  \hline   8  &  0.0434 0.1778 0.2529 0.0395 0.0000 &0.0316 &0.0079 0.0869 0.2252 0.0909 0.0395  \\
  \hline   9  &  0.0634 0.1587 0.3055 0.0674 0.0000 &0.0317 &0.0079 0.0634 0.1587 0.1230 0.0158  \\
  \hline   10 &  0.0515 0.1666 0.2698 0.0714 0.0000 &0.0198 &0.0000 0.0753 0.2182 0.1031 0.0198  \\
  \hline   11 &  0.0634 0.1507 0.3492 0.0714 0.0000 &0.0079 &0.0079 0.0753 0.1468 0.0873 0.0357   \\
  \hline   12 &  0.0396 0.1507 0.3253 0.0595 0.0000 &0.0277 &0.0079 0.0873 0.1904 0.0793 0.0277   \\
  \hline   13 &  0.0632 0.1581 0.2964 0.0790 0.0000 &0.0079 &0.0039 0.0909 0.1620 0.1027 0.0316   \\
  \hline   14 &  0.0478 0.1513 0.3067 0.0438 0.0000 &0.0358 &0.0079 0.0756 0.1713 0.1314 0.0239   \\
  \hline   15 &  0.0553 0.1699 0.2885 0.0830 0.0000 &0.0276 &0.0039 0.0750 0.1818 0.0909 0.0197   \\
  \hline   16 &  0.0517 0.1713 0.2868 0.0637 0.0000 &0.0239 &0.0000 0.0597 0.2071 0.1075 0.0239   \\
  \hline
\end{tabular}
\label{table:result}
\end{table*}

\section{Conclusion}

In this paper, the security of a new image encryption scheme based
on logistic map is re-analyzed in detail. Although it was reported that equivalent
secret key of the encryption scheme can be reconstructed with only one pair of known-plaintext
and the corresponding ciphertext, this paper further finds that
the scope of all the sub-keys can be narrowed further with relatively low computation under the same condition.
The results reported in this paper will help study security of other image encryption schemes
based on logistic map.

\begin{acknowledgements}
This research was supported by the National Natural Science Foundation of China (No. 61100216) and the
Alexander von Humboldt Foundation of Germany.
\end{acknowledgements}

\bibliographystyle{spmpsci}
\bibliography{Pareek2}
\end{document}